\tikzset{>=stealth'}
\newcommand{\comment}[1]{}
\newcommand{\mx}[1]{\mathbf{#1}}
\newcommand{\1}{\mathds{1}}
\newcommand{\ve}[1]{\underline{\smash{#1}}}
\newcommand{\ii}{\mathbbm{i}}
\newtheorem{theorem}{Theorem}
\newtheorem{definition}{Definition}
\newtheorem{corollary}{Corollary}
\newenvironment{proof}{\paragraph{Proof:}}{\hfill$\square$}
\title{High order concentrated non-negative matrix-exponential functions}
\author{Gabor Horvath$^1$ \and Illes Horvath$^2$ \and Miklos Telek$^{1,2}$}
\date{ $^1$ Budapest University of Technology and Economics, Dept. of Networked Systems and Services, Magyar Tudosok Korutja 2, 1117 Budapest, Hungary (ghorvath@hit.bme.hu,~telek@hit.bme.hu) \\
$^2$ MTA-BME Information Systems Research Group, Magyar Tudosok Korutja 2, 1117 Budapest, Hungary (horvath.illes.antal@gmail.com,~telek@hit.bme.hu)
}
\begin{document}
\maketitle


\begin{abstract}
Highly concentrated functions play an important role in many research fields including control system analysis and physics, and they turned out to be the key idea behind inverse Laplace transform methods as well.

This paper uses the matrix-exponential family of functions to create highly concentrated functions, whose squared coefficient of variation (SCV) is very low. In the field of stochastic modeling, matrix-exponential functions have been used for decades. They have many advantages: they are easy to manipulate, always non-negative, and integrals involving matrix-exponential functions often have closed-form solutions.  For the time being there is no symbolic construction available to obtain the most concentrated matrix-exponential functions, and the numerical optimization-based approach has many pitfalls, too.

In this paper, we present a numerical optimization-based procedure to construct highly concentrated matrix-exponential functions. To make the objective function explicit and easy to evaluate we introduce and use a new representation called hyper-trigonometric representation. This representation makes it possible to achieve very low SCV. 
\end{abstract}

\section{Introduction}

The class of finite dimensional matrix exponential functions is of special interest in control system analysis \cite{DeSc00,DGJM09}. In Laplace transform domain they form the class of rational functions \cite{lipsky2008queueing},
and there are plenty of Laplace transform domain numerical procedures that are applicable only for the set of rational functions \cite{LTRF}. 

Additionally, in the analysis of stochastic models matrix exponential distributions \cite{AsBl99} extend the analysis of memoryless systems such that the extended systems remain tractable with efficient numerical procedures \cite{BeNi07,[BUCH10c],bladt2017matrix}. 

Extreme members of finite dimensional matrix exponential functions are of special interest in many numerical procedures, for which numeric inverse Laplace transformation is a striking example \cite{[IHORV17a]}. 

In this paper, we cope with the numerical burden of computing higher order
non-negative matrix-exponential function, $f(t)$, which is concentrated in the sense that its squared coefficient of variation (SCV),
defined as $\frac{\mu_0 \mu_2}{\mu_1^2}-1$, where $\mu_i=\int_{t=0}^{\infty}t^i f(t) dt$, is low. 
Preliminary results are available in \cite{elteto2006minimal} up to $N=17$
and in \cite{horvath2016concentrated} up to $N=47$. These preliminary results indicate that the minimal SCV of the non-negative order $N$ class tends to be less than $2/N^2$. 
In this work, we propose numerical procedures by which much higher order concentrated non-negative matrix-exponential functions can be computed and based on that we refine the dependence of the minimal SCV on the order. 

The connections of control theory and stochastic models have been recognized previously \cite{[BUCH10a],DeSc00}. The authors' background is more in stochastic analysis, and consequently, we apply a stochastic analysis related notations and terminology, but the notions \emph{matrix-exponential distribution} and \emph{non-negative matrix-exponential functions} can be freely interchanged along the text. 

The rest of the paper is organized as follows.
Section \ref{sec:me} introduces the matrix-exponential functions, presents some of its properties and discusses the difficulty of finding concentrated functions. The main contribution of the paper is described in Section \ref{sec:scv}, where a new representation transformation method is presented to enable the efficient computation of the SCV. The details and the results of the optimization procedure are provided in Section \ref{sec:opt}. A new, efficient heuristic is introduced and evaluated in Section \ref{sec:heuristic}. Finally, Section \ref{sec:concl} concludes the paper.

\section{Concentrated non-negative ME functions}\label{sec:me}

In this work we focus on the following set of functions. 

\begin{definition}
Order $N$ \emph{matrix-exponential functions} (referred to as ME($N$)) are given by
\begin{align}
\label{eq:ft}
f(t)=\ve{\alpha} e^{\mx{A}t} (-\mx{A})\1,
\end{align}
where $\ve{\alpha}$ is a real row vector of size $N$, $\mx{A}$ is a real matrix of size $N\times N$ matrix and $\1$ is the column vector of ones of size $N$.
\end{definition}

A more general definition, where the column vector on the right is not fixed to $\1$, is also used frequently in the literature, but that generalization does not enlarge the set of functions (as proven in \cite{lipsky2008queueing,bladt2017matrix}). 

According to \eqref{eq:ft},  vector $\ve{\alpha}$ and matrix $\mx{A}$ define a matrix exponential function, which we refer to as \emph{matrix representation} in the sequel.
The matrix representation is not unique. Applying a similarity transformation with a non-singular matrix $\mx{T}$, for which $\mx{T}\1=\1$ holds, we obtain vector $\ve{\beta}=\ve{\alpha}\mx{T}^{-1}$ and matrix $\mx{B}=\mx{T}\mx{A}\mx{T}^{-1}$, that define the same matrix exponential function, since
\begin{align*}
\ve{\beta} e^{\mx{B}t} (-\mx{B})\1&= \ve{\alpha}\mx{T}^{-1} e^{\mx{T}\mx{A}\mx{T}^{-1}t} (-\mx{T}\mx{A}\mx{T}^{-1})\1
\\&= \ve{\alpha}\mx{T}^{-1}\mx{T} e^{\mx{A}t}\mx{T}^{-1}\mx{T} (-\mx{A})\mx{T}^{-1}\1= 
\ve{\alpha} e^{\mx{A}t} (-\mx{A})\1=f(t).
\end{align*}

We are interested in the non-negative concentrated members of the ME($N$) class. A non-negative function on $\Re^+$ is said to be concentrated when its squared coefficient of variation 
\begin{align}
SCV(f(t))=\frac{\mu_0 \mu_2}{\mu_1^2}-1,
\label{eq:SCV}
\end{align}
is low. In \eqref{eq:SCV}, $\mu_i$ denotes the $i$th moment, defined by $\mu_i=\int_{t=0}^{\infty}t^i f(t) dt$ for $i=0,1,2$. 

Although matrix-exponential functions have been used for many decades, there are still many questions open regarding their properties.

Such an important question is how to decide efficiently if a matrix-exponential function is non-negative for $\forall t>0$. In general, $f(t)\geq 0, \forall t>0$ does not necessarily hold for given $(\ve{\alpha},\mx{A})$ parameters, unless it has been constructed to be always non-negative. In this paper, we are going to restrict our attention to such a special construction, to the exponential-cosine square functions.

Another important, still open question is which subclass of the general ME($N$) class provides the most concentrated functions. For the special case when the elements of $\ve{\alpha}$ are probabilities, and the off-diagonals of $\mx{A}$ are non-negative, it is known that the Erlang distribution has the minimal SCV, and it is $1/N$ \cite{[ALDO87]}. For the most concentrated (general) non-negative ME($N$) functions only conjectures are available for $N\geq 3$ \cite{elteto2006minimal}. According to the current conjecture for odd $N$, the most concentrated non-negative ME($N$) belongs to a special subset of ME($N$) given by the definition below, which is the main interest for us in this work. 

\begin{definition}
The set of \emph{exponential cosine-square} functions of order $n$
has the form
\begin{align}
f^+(t) = e^{-t} \prod_{i=1}^{n} \cos^2\left(\frac{\omega t - \phi_i}{2}\right).\label{eq:cossquare}
\end{align}
\end{definition}

An exponential cosine-square function is defined by $n+1$ parameters: $\omega$ and $\phi_i$ for $i=1,\ldots,n$.
An exponential cosine-square function is a matrix exponential function.
Although the representation in \eqref{eq:cossquare}, which we refer to as \emph{cosine-square representation}, is not a matrix representation, in Section \ref{sec:matrix} we also present the associated matrix representation of size $N=2n+1$. Consequently, the set of exponential cosine-square functions of order $n$ is a special subset of ME($N$) (where $N=2n+1$) which, by construction, is non-negative.
The SCV of an exponential cosine-square function is a complex function of the parameters, whose minimum does not exhibit a closed analytic form. 
That is why we resorted to numerical procedures. 

For a given odd order $N=2n+1$, our goal is to find $\omega$ and $\phi_i$ ($i=1,\ldots,n$) which minimizes the SCV of $f^+(t)$. More precisely, we are looking for efficient numerical methods for finding $\omega$ and $\phi_i$ ($i=1,\ldots,n$) parameters which results in low SCV for high order $N$.  
For efficient numerical minimization of the SCV we need 
\begin{itemize}
\item[$i$)] 
an accurate computation of the SCV based on the parameters with low computational cost 
and
\item[$ii$)] and efficient optimization procedure.
\end{itemize} 
 
In \cite{horvath2016concentrated}, which presents results up to $N=47$, step $i$) is performed by double precision numerical integration, and step $ii$ by the Rechenberg method (discussed in Section \ref{sec:opt}). With that procedure it was not possible to go beyond $N=47$ because of two main reasons. First,  the accuracy of the numerical integration around $N=47$ (step $i$) in \cite{horvath2016concentrated}) decreased to the level that step $ii$ failed to converge properly. 
Apart of the accuracy issues, the computation time of step $i$) increased super-linearly by $N$ and made the computations inhibitive for larger $N$ values.  

In this paper we replace both of these steps of the numerical procedure to increase the accuracy and decrease the computational cost of step $i$) (in Section \ref{sec:scv}) and to find a better minimum in step $ii$) (in Section \ref{sec:opt}). Between these two changes the efficient computation of the objective function (step $i$) plays the more important role, because it makes possible to perform any optimization for larger orders ($N>>47$). 

\section{Efficient computation of the squared coefficient of variation}\label{sec:scv}

To evaluate the objective function of the optimization, namely the SCV, we need efficient methods to compute $\mu_0$, $\mu_1$ and $\mu_2$. 
Deriving the $\mu_i$ parameters based on \eqref{eq:cossquare} is difficult (for large $N$ values). Hence we propose to compute them based on a different representation. 

\subsection{The hyper-trigonometric representation}\label{sec:hypertrig}

The following theorem defines the \emph{hyper-trigonometric form} of the exponential cosine-square functions and provides a recursive procedure to obtain its parameters from $\omega,\phi_i, i=1,\dots, n$.
\begin{theorem}\label{thm:hypertrig}
	An order $N=2n+1$ exponential cosine-square function can be transformed to hyper-trigonometric representation of form
	\begin{align}
	\label{eq:htrig}
	f^{+}(t)=c^{(n)}\cdot e^{-t} + e^{-t}\sum_{k=1}^{n} a^{(n)}_k \cos(k \omega t) + e^{-t}\sum_{k=1}^{n} b^{(n)}_k \sin(k \omega t),
	\end{align}
	where the coefficients $a^{(n)}_k$ and $b^{(n)}_k$, for $1<k<n-1$ and $n>2$, can be determined recursively as
	\begin{align}	
	a^{(n)}_k &= \frac{1}{2}a^{(n-1)}_k + \frac{1}{2}\frac{a^{(n-1)}_{k-1}+a^{(n-1)}_{k+1}}{2}\cos{\phi_n} + \frac{1}{2}\frac{b^{(n-1)}_{k+1}-b^{(n-1)}_{k-1}}{2}\sin{\phi_n}, \label{eq:ank}\\
	b^{(n)}_k &= \frac{1}{2}b^{(n-1)}_k + \frac{1}{2}\frac{b^{(n-1)}_{k-1}+b^{(n-1)}_{k+1}}{2}\cos{\phi_n} + \frac{1}{2}\frac{a^{(n-1)}_{k-1}-a^{(n-1)}_{k+1}}{2}\sin{\phi_n}, \label{eq:bnk}\\
	c^{(n)} &= \frac{1}{2}c^{(n-1)} + \frac{1}{4}a^{(n-1)}_{1}\cos{\phi_n} + \frac{1}{4}b^{(n-1)}_{1}\sin{\phi_n}.\label{eq:cnk}
	\end{align}
	For the boundaries $k=\{1,n-1,n\}, n>2$, the coefficients are
	\begin{align}
	a^{(n)}_{1}&= \frac{1}{2}a^{(n-1)}_{1} + \frac{1}{4}a^{(n-1)}_{2}\cos{\phi_n} + \frac{1}{4}b^{(n-1)}_{2}\sin{\phi_n} + \frac{1}{2}c^{(n-1)}\cos{\phi_n}, \\
	b^{(n)}_{1}&= \frac{1}{2}b^{(n-1)}_{1} + \frac{1}{4}b^{(n-1)}_{2}\cos{\phi_n} - \frac{1}{4}a^{(n-1)}_{2}\sin{\phi_n} + \frac{1}{2}c^{(n-1)}\sin{\phi_n}, \\
	a^{(n)}_{n-1}&= \frac{1}{4}a^{(n-1)}_{n-2}\cos{\phi_n} - \frac{1}{4}b^{(n-1)}_{n-2}\sin{\phi_n} + \frac{1}{2}a^{(n-1)}_{n-1}, \\
	b^{(n)}_{n-1}&= \frac{1}{4}b^{(n-1)}_{n-2}\cos{\phi_n} + \frac{1}{4}a^{(n-1)}_{n-2}\sin{\phi_n} + \frac{1}{2}b^{(n-1)}_{n-1}, \\
	a^{(n)}_{n}&= \frac{1}{4}a^{(n-1)}_{n-1}\cos{\phi_n} - \frac{1}{4}b^{(n-1)}_{n-1}\sin{\phi_n}, \\
	b^{(n)}_{n}&= \frac{1}{4}b^{(n-1)}_{n-1}\cos{\phi_n} + \frac{1}{4}a^{(n-1)}_{n-1}\sin{\phi_n}.
	\end{align}
	The coefficients for $n=1$ are given by
	\begin{align}
	a^{(1)}_{1} = \frac{1}{2}\cos{\phi_1}, ~~~ b^{(1)}_{1} = \frac{1}{2}\sin{\phi_1}, ~~~ c^{(1)} = \frac{1}{2},
	\end{align}
	while for $n=2$ we have
	\begin{align}
	\begin{split}
	a^{(2)}_{1} &= \frac{1}{4}(\cos{\phi_1}+\cos{\phi_2}), ~~~ b^{(2)}_{1} = \frac{1}{4}(\sin{\phi_1}+\sin{\phi_2}), \\
	a^{(2)}_{2} &= \frac{1}{8}\cos(\phi_1-\phi_2), ~~~ b^{(2)}_{2} = \frac{1}{8}\sin(\phi_1+\phi_2),	\\
	c^{(2)} &= \frac{1}{4} + \frac{1}{8}\cos(\phi_1-\phi_2).
	\end{split}
	\end{align}
\end{theorem}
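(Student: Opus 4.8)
The plan is to prove the claim by induction on $n$, expressing each step as multiplication by one more factor $\cos^2\!\big(\tfrac{\omega t - \phi_n}{2}\big)$ and tracking how this acts on the linear span $\mathrm{span}\{1,\cos(k\omega t),\sin(k\omega t): 1\le k\le n-1\}$. First I would establish the base cases $n=1$ and $n=2$ directly: for $n=1$, use $\cos^2\!\big(\tfrac{\omega t-\phi_1}{2}\big)=\tfrac12+\tfrac12\cos(\omega t-\phi_1)=\tfrac12+\tfrac12\cos(\omega t)\cos\phi_1+\tfrac12\sin(\omega t)\sin\phi_1$, which gives $c^{(1)}=\tfrac12$, $a^{(1)}_1=\tfrac12\cos\phi_1$, $b^{(1)}_1=\tfrac12\sin\phi_1$; the $n=2$ formulas follow from one more such multiplication together with the product-to-sum identities. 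These computations also serve as a template for the inductive step.

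For the inductive step, assume $f^+$ with $n-1$ factors has the stated form with coefficients $c^{(n-1)}, a^{(n-1)}_k, b^{(n-1)}_k$ (with the convention that coefficients with out-of-range indices are zero). Then $f^+$ with $n$ factors equals the $(n-1)$-version times $\tfrac12 + \tfrac12\cos(\omega t)\cos\phi_n + \tfrac12\sin(\omega t)\sin\phi_n$. The key computational tool is the set of product-to-sum identities
\begin{align*}
\cos(k\omega t)\cos(\omega t) &= \tfrac12\big(\cos((k{+}1)\omega t)+\cos((k{-}1)\omega t)\big),\\
\sin(k\omega t)\cos(\omega t) &= \tfrac12\big(\sin((k{+}1)\omega t)+\sin((k{-}1)\omega t)\big),\\
\cos(k\omega t)\sin(\omega t) &= \tfrac12\big(\sin((k{+}1)\omega t)-\sin((k{-}1)\omega t)\big),\\
\sin(k\omega t)\sin(\omega t) &= \tfrac12\big(\cos((k{-}1)\omega t)-\cos((k{+}1)\omega t)\big).
\end{align*}
Substituting these and collecting the coefficient of $\cos(k\omega t)$ (respectively $\sin(k\omega t)$) on the right-hand side yields exactly the recursions \eqref{eq:ank}, \eqref{eq:bnk}, \eqref{eq:cnk}: each new coefficient picks up $\tfrac12$ of the old same-index coefficient, plus $\tfrac12\cos\phi_n$ times the average of the two neighboring same-type coefficients, plus $\tfrac12\sin\phi_n$ times the corresponding average of the other-type neighbors, with the sign on the $\sin\phi_n$ term determined by which of the four identities above contributes. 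One also checks that the degree grows by exactly one, so the new expansion runs to $k=n$, confirming the order is $N=2n+1$.

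The remaining work is bookkeeping at the boundary indices $k\in\{1,n-1,n\}$, and this is the step I expect to be the main (though purely mechanical) obstacle. The general recursion uses neighbors $k-1$ and $k+1$; at $k=1$ the "$k-1$" slot is the constant term $c^{(n-1)}$ rather than an $a^{(n-1)}_0$, which is why an extra $\tfrac12 c^{(n-1)}\cos\phi_n$ (resp. $\tfrac12 c^{(n-1)}\sin\phi_n$) appears and why there is no $\tfrac14$-averaging on that side — the factor is $\tfrac12 c^{(n-1)}$, not $\tfrac14(c^{(n-1)}+a^{(n-1)}_2)$, because $\cos(0\cdot\omega t)=1$ contributes with weight $1$, not $\tfrac12$, under the product-to-sum rule. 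At $k=n-1$ and $k=n$ the "$k+1$" neighbors $a^{(n-1)}_n, a^{(n-1)}_{n-1+1}$ are absent (zero), so only the surviving terms remain, and at $k=n$ the same-index term $\tfrac12 a^{(n-1)}_n$ also vanishes. I would verify each of the six boundary formulas by specializing the general substitution with the zero-padding convention and confirming the constant-term coefficient $c^{(n)}$ separately from the identity $\cos(k\omega t)\cos(\omega t)$ with $k=1$ producing a $\cos(0)=1$ term. Once the base cases and the index-range cases are checked, the induction closes and the theorem follows.
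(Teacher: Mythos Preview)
Your proposal is correct and follows essentially the same route as the paper: induction on $n$, rewriting the extra factor as $\tfrac12+\tfrac12\cos(\omega t)\cos\phi_n+\tfrac12\sin(\omega t)\sin\phi_n$, applying the four product-to-sum identities, and collecting coefficients. If anything, you are slightly more thorough than the paper's proof, which does not explicitly verify the base cases $n=1,2$ and dispatches the boundary indices with ``can be obtained in a similar manner,'' whereas you spell out why the constant term $c^{(n-1)}$ enters $a^{(n)}_1$ with weight $\tfrac12$ rather than $\tfrac14$.
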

\begin{proof}
The theorem is proved by induction. For the order $N=2n+1$ exponential cosine-square function we introduce the notation $f^{(n)}(t)$. Assuming that the theorem holds for $n-1$, we are going to show that it holds for $n$ as well. We have that 
\begin{align*}
\begin{split}
f^{(n)}(t) &= f^{(n-1)}(t)\cos^2\left(\frac{\omega t - \phi_i}{2}\right) = f^{(n-1)}(t)\left(\frac{1}{2}+\frac{1}{2}\cos(\omega t-\phi_n)\right) \\
&=f^{(n-1)}(t)/2+f^{(n-1)}(t)\cos(\omega t) \cos(\phi_n)/2 + f^{(n-1)}(t)\sin(\omega t) \sin(\phi_n)/2 \\
&=c^{(n-1)}\cdot e^{-t}/2 + e^{-t}\sum_{k=1}^{n-1} a^{(n-1)}_k \cos(k \omega t)/2 + e^{-t}\sum_{k=1}^{n-1} b^{(n-1)}_k \sin(k \omega t)/2 \\
&\quad+c^{(n-1)} e^{-t}\cos(\omega t) \cos(\phi_n)/2 + c^{(n-1)} e^{-t}\sin(\omega t) \sin(\phi_n)/2  \\
&\quad+ e^{-t}\sum_{k=1}^{n-1}a^{(n-1)}_k \cos(k \omega t) \cos(\omega t) \cos(\phi_n)/2 \\
&\quad+e^{-t}\sum_{k=1}^{n-1}a^{(n-1)}_k \cos(k \omega t)\sin(\omega t) \sin(\phi_n)/2 \\
&\quad+e^{-t}\sum_{k=1}^{n-1}b^{(n-1)}_k \sin(k \omega t)\cos(\omega t) \cos(\phi_n)/2 \\
&\quad+e^{-t}\sum_{k=1}^{n-1}b^{(n-1)}_k \sin(k \omega t)\sin(\omega t) \sin(\phi_n)/2.
\end{split}
\end{align*}
Using the identities
\begin{align*}
\cos(k \omega t) \cos(\omega t) &= \cos((k+1)\omega t)/2 + \cos((k-1)\omega t)/2, \\
\sin(k \omega t) \sin(\omega t) &= -\cos((k+1)\omega t)/2 + \cos((k-1)\omega t)/2, \\
\cos(k \omega t) \sin(\omega t) &= \sin((k+1)\omega t)/2 - \sin((k-1)\omega t)/2, \\
\sin(k \omega t) \cos(\omega t) &= \sin((k+1)\omega t)/2 + \sin((k-1)\omega t)/2,
\end{align*}
and collecting the coefficients corresponding to $\cos(k \omega t)$ and $\sin(k \omega t)$ provides \eqref{eq:ank} and \eqref{eq:bnk}. Terms from $\cos((k-1)\omega t)$ at $k=1$ contribute to $c^{(n)}$, leading to \eqref{eq:cnk}. The relations for the boundaries can be obtained in a similar manner.
\end{proof}

The hyper-trigonometric representation makes it possible to express the Laplace transform (LT) and the $\mu_i$ moments in a simple and compact way.

\begin{corollary}
\label{cor:moments}	
The LT and the $\mu_i, i=0,1,2$ moments of the exponential cosine-square function are given by 
\begin{align}
f^{*}(s)=\int_{0}^\infty e^{-st}f^{+}(t)\,dt=\frac{c^{(n)}}{1+s} + \sum_{k=1}^n \frac{a_k^{(n)}(1+s)+b_k^{(n)}k\omega}{(1+s)^2+(k\omega)^2},
\end{align}
and
\begin{align}
\begin{split}
\mu_0 &= c^{(n)} + \sum_{k=1}^n \frac{a_k^{(n)}+b_k^{(n)}k\omega}{1+(k\omega)^2},  \\
\mu_1 &= c^{(n)} + \sum_{k=1}^n \frac{ a_k^{(n)}+2b_k^{(n)}k\omega-a_k^{(n)}(k\omega)^2}{(1+(k\omega)^2)^2}, \\
\mu_2 &= 2 c^{(n)} + \sum_{k=1}^n \frac{2 a_k^{(n)} + 6b_k^{(n)}k\omega - 6 a_k^{(n)}(k\omega)^2 - 2b_k^{(n)}(k\omega)^3}{(1+(k\omega)^2)^3}. \label{eq:moments}
\end{split}
\end{align}
\end{corollary}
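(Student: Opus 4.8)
The plan is to obtain $f^{*}(s)$ by applying the Laplace transform term by term to the hyper-trigonometric representation \eqref{eq:htrig}, and then to read off the moments from the derivatives of $f^{*}(s)$ at $s=0$.

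First I would recall the elementary transform pairs
\[
\int_0^\infty e^{-st}e^{-t}\,dt=\frac{1}{1+s},\qquad
\int_0^\infty e^{-st}e^{-t}\cos(k\omega t)\,dt=\frac{1+s}{(1+s)^2+(k\omega)^2},\qquad
\int_0^\infty e^{-st}e^{-t}\sin(k\omega t)\,dt=\frac{k\omega}{(1+s)^2+(k\omega)^2},
\]
which follow from the frequency-shift property applied to the transforms of $\cos(k\omega t)$ and $\sin(k\omega t)$, or equivalently from $\int_0^\infty e^{-(1+s)t}e^{\pm\ii k\omega t}\,dt=1/(1+s\mp\ii k\omega)$ by separating real and imaginary parts. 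Since \eqref{eq:htrig} is a finite linear combination of these three functions with the constant coefficients $c^{(n)},a_k^{(n)},b_k^{(n)}$, linearity of the integral gives the stated closed form for $f^{*}(s)$ at once; all integrals converge absolutely because every term decays like $e^{-t}$, so no interchange-of-limits issue arises.

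For the moments I would use the standard identity $\mu_i=(-1)^i\frac{d^i}{ds^i}f^{*}(s)\big|_{s=0}$, which is legitimate because the poles of $f^{*}(s)$ lie at $s=-1$ and $s=-1\pm\ii k\omega$, all with real part $-1$, so $f^{*}$ is analytic in a neighbourhood of the origin and differentiation under the integral sign is valid. It then remains to differentiate each summand. For the term $c^{(n)}/(1+s)$ the $i$-th derivative at $s=0$ equals $(-1)^i i!$, so it contributes $c^{(n)},c^{(n)},2c^{(n)}$ to $\mu_0,\mu_1,\mu_2$ respectively. For the generic term, writing $u=1+s$ and $w=k\omega$, one differentiates $g(u)=(a_k^{(n)}u+b_k^{(n)}w)/(u^2+w^2)$ up to twice, evaluates at $u=1$, multiplies by $(-1)^i$, and collects terms over the common denominator $(1+(k\omega)^2)^{i+1}$; this reproduces \eqref{eq:moments}.

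The only genuine work is the last step, in particular the second derivative $g''(u)$, which after clearing denominators is a short but slightly error-prone polynomial simplification; I expect that bookkeeping to be the main (purely computational) obstacle, while everything else is a direct use of linearity and of the moments-from-transform identity. An alternative route bypassing the transform is to evaluate $\mu_i$ directly as the real and imaginary parts of $\int_0^\infty t^i e^{-(1-\ii k\omega)t}\,dt = i!/(1-\ii k\omega)^{i+1}$, but the amount of algebra is essentially the same.
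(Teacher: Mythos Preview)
Your proposal is correct and follows essentially the same approach as the paper: Laplace-transform the hyper-trigonometric representation \eqref{eq:htrig} term by term using the standard transforms of $e^{-t}$, $e^{-t}\cos(k\omega t)$, $e^{-t}\sin(k\omega t)$, and then recover the moments via $\mu_i=(-1)^i\frac{d^i}{ds^i}f^{*}(s)\big|_{s=0}$. Your version is in fact more detailed than the paper's, as you justify convergence and analyticity and spell out the differentiation, whereas the paper simply cites the transform pairs and the moment identity.
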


\begin{proof}
Since \eqref{eq:htrig}, is linear, it can be Laplace transformed term-by-term using the following relations
\begin{align*}
LT(e^{-t})  &= \frac{1}{s+1}, \\ 
LT(e^{-t} \cos(\omega t)) &= \frac{s+1}{(s+1)^2+\omega^2} , \\
LT(e^{-t} \sin(\omega t)) &= \frac{\omega}{(s+1)^2+\omega^2}. 
\end{align*}
Based on $f^{*}(s)$, the $\mu_i$ moments can be computed using the LT moment relation 
\[ \mu_i
= (-1)^i \frac{d^i}{ds^i} f^{*}(s)\bigg|_{s=0}. \]
\end{proof}

In order to compute the matrix representation of the exponential cosine-square function, we introduce one more representation and the associated transformations. 

\subsection{The spectral representation}\label{sec:spectral}

The hyper-trigonometric representation makes it easy to obtain the spectral form of $f^{+}(t)$ as
\begin{align}
\begin{split}
f^{+}(t) =& ~c^{(n)}e^{-t} \\
&~ + \frac{1}{2}\sum_{k=1}^n \left(a_k^{(n)}+\ii b_k^{(n)}\right)e^{-(1+\ii k\omega)t} + 
\left(a_k^{(n)}-\ii b_k^{(n)}\right)e^{-(1-\ii k\omega)t},\label{eq:spectral}
\end{split}
\end{align}
where $\ii$ denotes the imaginary unit. 
In this form, $f^{+}(t)$ is a sum of exponential functions with complex exponential coefficients.  Specifically, it has a real exponential coefficient $-1$ and $n$ pairs of complex conjugate coefficients $-1\pm \ii k\omega$ ($k=1,\ldots,n$). 
We refer to  this representation as \emph{spectral representation} because these exponential coefficients are the eigenvalues of the matrix representation. 

The constant multipliers of the exponential functions with complex conjugate exponential coefficients are complex conjugates as well, which ensures that $f^{+}(t)$ is real. 
The real and the imaginary parts of the constant multipliers are given by $a_k^{(n)}$ and $b_k^{(n)}$, respectively.

\subsection{The matrix representation}\label{sec:matrix}

\begin{corollary}
For $\forall t\geq 0$, the $f^{+}(t) = e^{-t} \prod_{i=1}^{n} \cos^2\left(\frac{\omega t - \phi_i}{2}\right)$
function has a size $N=2n+1$ matrix representation $f^+(t)= \ve{\beta} e^{\mx{B}t} (-\mx{B})\1$,
where row vector $\ve{\beta}$ is composed by the elements
\begin{align}
\beta_1&=c^{(n)}, \\
\beta_{2k} &= \frac{1}{2}\frac{a_k^{(n)}(1+k\omega)-b_k^{(n)}(1-k\omega)}{1+(k\omega)^2}, \\
\beta_{2k+1} &= \frac{1}{2}\frac{a_k^{(n)}(1-k\omega)+b_k^{(n)}(1+k\omega)}{1+(k\omega)^2},
\end{align}
for $k=1,\dots,n$, and matrix $\mx{B}$ is given by
\begin{align}
\mx{B}=\begin{bmatrix}
-1 \\
& -1 & -\omega \\
& \omega & -1 \\
&&& -1 & -2\omega \\
&&& 2\omega & -1 \\
&&&&\ddots & \ddots& \ddots\\
&&&&&& -1 & -n\omega \\
&&&&&& n\omega & -1 \\
\end{bmatrix}~.
\end{align}
\end{corollary}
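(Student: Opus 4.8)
The plan is to verify the identity $f^{+}(t)=\ve{\beta} e^{\mx{B}t} (-\mx{B})\1$ directly, by evaluating the right-hand side in closed form and matching it, term by term, with the hyper-trigonometric representation \eqref{eq:htrig} established in Theorem~\ref{thm:hypertrig}. The structural fact that makes this tractable is that $\mx{B}$ is block diagonal: one $1\times 1$ block $[-1]$, followed by $n$ blocks of size $2\times 2$, the $k$-th of which, call it $\mx{B}_k$, has both diagonal entries equal to $-1$, upper off-diagonal $-k\omega$ and lower off-diagonal $k\omega$. Hence $e^{\mx{B}t}$ and $(-\mx{B})\1$ decompose blockwise, and the scalar $\ve{\beta}e^{\mx{B}t}(-\mx{B})\1$ splits into $n+1$ independent contributions that can be summed separately.

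First I would compute the blockwise matrix exponentials. Writing $\mx{B}_k=-\I+k\omega\,\mx{J}$ with $\mx{J}=\left[\begin{smallmatrix}0&-1\\1&0\end{smallmatrix}\right]$ and using $\mx{J}^2=-\I$, one gets $e^{\mx{B}_k t}=e^{-t}\left(\cos(k\omega t)\,\I+\sin(k\omega t)\,\mx{J}\right)$, that is, $e^{-t}$ times the standard rotation matrix; the scalar block simply gives $e^{-t}$. Next, $(-\mx{B}_k)\1=(1+k\omega,\,1-k\omega)^{\top}$, so $e^{\mx{B}_k t}(-\mx{B}_k)\1=e^{-t}$ times a two-vector whose entries are linear combinations of $\cos(k\omega t)$ and $\sin(k\omega t)$. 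Pairing with the corresponding pair $(\beta_{2k},\beta_{2k+1})$ of entries of $\ve{\beta}$ and collecting, the $k$-th block contributes $e^{-t}\big[\big(\beta_{2k}(1+k\omega)+\beta_{2k+1}(1-k\omega)\big)\cos(k\omega t)+\big(\beta_{2k+1}(1+k\omega)-\beta_{2k}(1-k\omega)\big)\sin(k\omega t)\big]$, while the scalar block contributes $\beta_1 e^{-t}$.

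Comparing this with \eqref{eq:htrig} forces $\beta_1=c^{(n)}$ and, for each $k$, the $2\times 2$ linear system $\beta_{2k}(1+k\omega)+\beta_{2k+1}(1-k\omega)=a_k^{(n)}$ and $-\beta_{2k}(1-k\omega)+\beta_{2k+1}(1+k\omega)=b_k^{(n)}$. Its coefficient matrix has determinant $(1+k\omega)^2+(1-k\omega)^2=2\big(1+(k\omega)^2\big)\neq 0$, so Cramer's rule delivers exactly the stated closed forms for $\beta_{2k}$ and $\beta_{2k+1}$. Since all eigenvalues of $\mx{B}$, namely $-1$ and $-1\pm\ii k\omega$, have negative real part, $e^{\mx{B}t}$ decays and the representation is valid for all $t\ge 0$; as a consistency check one can confirm $\mu_0=\ve{\beta}\1$ against Corollary~\ref{cor:moments}.

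I do not expect a genuine obstacle: once the blockwise exponential is written down the argument is elementary linear algebra. The one place demanding care is the sign and index bookkeeping in the $2\times 2$ blocks---keeping $(-\mx{B}_k)\1$, the rotation matrix entries, and the ordering of $\beta_{2k},\beta_{2k+1}$ mutually consistent---since a slip there propagates directly into the $\ve{\beta}$ formulas. An alternative that sidesteps even this is to start from the spectral representation \eqref{eq:spectral}: each $\mx{B}_k$ is diagonalizable over $\mathbb{C}$ with eigenvalues $-1\pm\ii k\omega$ and explicit eigenvectors, so $\ve{\beta}e^{\mx{B}t}(-\mx{B})\1$ becomes a sum of complex exponentials whose coefficients, matched against $\tfrac12\big(a_k^{(n)}\pm\ii b_k^{(n)}\big)$, yield the same linear system for $\ve{\beta}$.
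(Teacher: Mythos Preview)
Your proposal is correct and follows essentially the same approach as the paper: exploit the block-diagonal structure of $\mx{B}$, compute the $2\times 2$ block exponential $e^{\mx{B}_k t}=e^{-t}\bigl(\cos(k\omega t)\,\I+\sin(k\omega t)\,\mx{J}\bigr)$, and match each block's contribution $[\beta_{2k},\beta_{2k+1}]\,e^{\mx{B}_k t}(-\mx{B}_k)\1$ against the corresponding term of the hyper-trigonometric representation~\eqref{eq:htrig}. The only minor difference is presentational: you explicitly set up and solve the $2\times 2$ linear system for $(\beta_{2k},\beta_{2k+1})$ via Cramer's rule, whereas the paper simply states the block identity and the matrix-exponential expansion and leaves the extraction of the $\beta$ entries implicit.
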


\begin{proof}
Matrix $\mx{B}$ is block diagonal, hence its eigenvalues are the eigenvalues of the diagonal blocks. The eigenvalue associated with the first diagonal block of size $1$ is $-1$, the rest of the diagonal blocks are of size $2$. The eigenvalues associated with the $k$th size $2$ diagonal block are $-1\pm\ii k\omega$. 

Due to the block diagonal structure of $\mx{B}$ the diagonal blocks
and their associated multipliers in vector $\ve{\beta}$ represent the terms of the hyper-trigonometric representation in \eqref{eq:htrig} one by one. 
The first diagonal block of size $1$ and its associated multiplier $\beta_1=c^{(n)}$
represent the $c^{(n)} e^{-t}$ term of \eqref{eq:htrig}. 
The $k$th size $2$ diagonal block
and its associated multipliers $\beta_{2k}$ and $\beta_{2k+1}$ represent the $k$th term of  summation \eqref{eq:htrig}, that is 
\begin{align*}
&[\beta_{2k}, \beta_{2k+1}] ~ e^{\begin{small}
	\begin{array}{|cc|}
\hline
-1 & -k\omega \\
k\omega & -1 \\
\hline
\end{array}
	\end{small}~t}
~~ 
\begin{array}{|c|}
\hline
1+ k\omega \\
1- k\omega  \\
\hline
\end{array} 
= e^{-t} (a^{(n)}_k \cos(k \omega t) + b^{(n)}_k \sin(k \omega t)), 
\end{align*}  
which is based on the expansion 
\[  e^{\begin{small}
	\begin{array}{|cc|}
	\hline
	-1 & -k\omega \\
	k\omega & -1 \\
	\hline
	\end{array}
	\end{small}~t} = 
\begin{array}{|cc|}
\hline
e^{-t} \cos(k \omega t) & -e^{-t}\sin(k \omega t) \\
e^{-t} \sin(k \omega t) & e^{-t}\cos(k \omega t) \\
\hline
\end{array}. 
\]
\end{proof}

\subsection{Numerical computation of the moments}

Theorem \ref{thm:hypertrig} together with Corollary \ref{cor:moments} provide a very efficient explicit method to compute the SCV based on the parameters $\omega,\phi_i, i=1,\dots, n$. 

There is one numerical issue that has to be taken care of when applying this numerical procedure with floating point arithmetic for large $n$ values. 
To evaluate the SCV, coefficients $a_k^{(n)},b_k^{(n)},c^{(n)}$ need to be obtained from the  $\omega$ and $\phi_i, i=1,\dots,n$ parameters. The recursion defined in Theorem \ref{thm:hypertrig} involves multiplications between bounded numbers (sine and cosine always fall into $[-1,+1]$), which is beneficial from the numerical stability point of view, but subtractions are unfortunately also present, leading to loss of precision. To overcome this loss of precision, we introduced increased precision floating point arithmetic 
both in our Mathematica and C++ implementations\footnote{In C++ we used to \texttt{mpfr} library for multi-precision computations}. Mathematica can quantify the precision loss, enabling us to investigate this issue experimentally. According to Figure \ref{fig:precloss}, the number of accurate decimal digits lost when evaluating the SCV from the $\omega,\phi_i$ parameters (computed by the \texttt{Precision} function of Mathematica), denoted by $L_n$, is nearly linear and can be approximated by 
\begin{align}
L_n \approx 1.487 + 0.647 n. \label{eq:ln}
\end{align}
In the forthcoming numerical experiments we have set the floating point precision to $L_n + 16$ decimal digits to obtain results up to $16$ accurate decimal digits, and this precision setting eliminated all numerical issues.

\begin{figure}
	\centering
	\resizebox{0.695\textwidth}{!}{
		\begin{tikzpicture}
		\tikzstyle{every node}=[font=\scriptsize]
		\tikzstyle{every axis plot}=[semithick]
		\begin{axis}[
			xlabel={$n=(N-1)/2$},
			ylabel={$L_n$, lost decimal digits},
			domain=1:74,
			width=7cm,			
			height=5cm,
			legend pos=north west,			
			cycle list name=linestyles*
			]
			\addplot[only marks,mark=+,mark size=1pt] table[x index=0,y index=1] {./precloss.txt};
			\addlegendentry{Experimental}
			\addplot[solid] {1.487+x*0.647};
			\addlegendentry{Approximated}
		\end{axis}
	\end{tikzpicture}
	}	
	\caption{The precision loss while computing the SCV}
	\label{fig:precloss}
\end{figure}
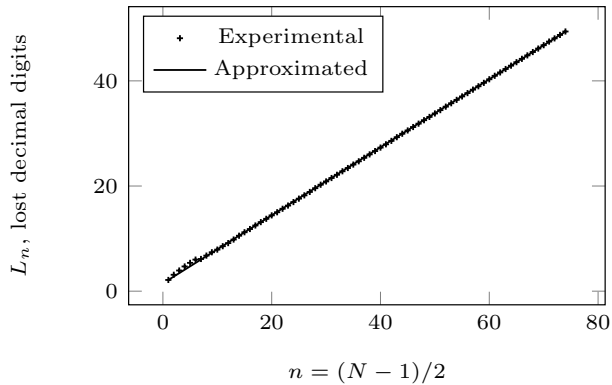

It is important to note that the high precision is needed only to calculate the $a_k^{(n)},b_k^{(n)},c^{(n)}$ coefficients and the SCV itself. Representing parameters $\omega,\phi_i$ themselves does not need extra precision, and the resulting exponential cosine-square function $f(t)$ can be evaluated with machine precision as well (in the range of our interest, $n\leq 1000$).

A basic pseudo-code of the computation of the SCV with the indications where high precision is needed is provided by Algorithm \ref{alg:scv}.

\begin{algorithm}[H]
	\begin{algorithmic}[1]
		\Procedure{ComputeSCV}{$n,\omega,\phi_i,i=1,\dots,n$}
		
		\State Compute the required precision, $L_n$, from \eqref{eq:ln}
		\State Convert $\omega,\phi_i,i=1,\dots,n$ to $L_n+16$ digits precision
		\State Calculate $a_k^{(n)},b_k^{(n)},c^{(n)}, k=1,\dots,n$, recursively by Theorem \ref{thm:hypertrig} (\textbf{high precision})
		\State Calculate moments $\mu_0,\mu_1,\mu_2$  according to \eqref{eq:moments} (\textbf{high precision})
		\State Calculate $SCV=\frac{\mu_0\mu_2}{\mu_1^2}-1$ (\textbf{high precision})
		\State Convert $SCV$ to machine precision
		\State \Return $SCV$
		\EndProcedure
	\end{algorithmic}
	\caption{Pseudo-code for the computation of the SCV}\label{alg:scv}
\end{algorithm}

\section{Minimizing the squared coefficient of variation}
\label{sec:opt}

Given the size of the representation $N=2n+1$, the $f^+(t)$ function providing the minimal SCV is obtained by minimizing \eqref{eq:SCV} subject to $\omega$ and $\phi_i, i=1,\dots,n$. The form of SCV does not allow a symbolic solution, and its numerical optimization is challenging, too. The surface to optimize has many local optima, hence simple gradient descent procedures failed to find the global optimum and are sensitive to the initial guess.

\subsection{Optimizing the parameters}

In the numerical optimization of the parameters, we had success with evolutionary optimization methods, in particular with \emph{evolution strategies}. The results introduced in \cite{horvath2016concentrated} were obtained by one of the simplest evolution strategy, the Rechenberg method \cite{rechenberg1978evolutionsstrategien}. In \cite{horvath2016concentrated}, it was the high computational demand of the numerical integration needed to obtain the SCV and its reduced accuracy that prevented the optimization for $N>47$ ($n>23$). 

However, computing the SCV based on the hyper-trigonometric representation using the results of Section \ref{sec:hypertrig} allows us to evaluate the moments orders of magnitudes faster and more accurately, enabling the optimization for higher $n$ values. With the Rechenberg method (\cite{rechenberg1978evolutionsstrategien}, also referred to as (1+1)-ES in the literature) it is possible to obtain low SCV values relatively quickly for orders as high as $n=125$,
but these values are suboptimal in the majority of the cases.  

With more advanced evolution strategies the optimal SCV can be approached better. Our implementation supports the covariance matrix adoption evolution strategy (CMA-ES \cite{hansen2006cma}), and one of its variant, the BIPOP-CMA-ES with restarts \cite{hansen2009benchmarking}. Starting from a random initial guess, we got very low SCV much quicker with the CMA-ES then with the (1+1)-ES with similar suboptimal minimum values (cf. Fig. \ref{fig:mincv_heur_largen}). The limit of applicability of CMA-ES is about $n=180$. The best solution (lowest SCV for the given order), however, was always provided by the BIPOP-CMA-ES method, although it is by far the slowest among the three methods we studied. 
In fact, we believe that BIPOP-CMA-ES returned the global optimum for $n=1,\dots,74$,
and we investigate the properties of those solutions in the next sections as they were the optimal ones.
For $n>74$, we can still compute low SCV functions with the BIPOP-CMA-ES method, but its computation time gets to be prohibitive, and we are less confident about the global minimality of the results.

For our particular problem, the running time, $T$, and the quality of the minimum, $Q$, obtained by the different optimization methods can be summarized as follows
\begin{align*}
T_{\text{CMA-ES}} < &~ T_{\text{(1+1)-ES}} <<  T_{\text{BIPOP-CMA-ES}}, \\
Q_{\text{CMA-ES}} \sim &~ Q_{\text{(1+1)-ES}} <  Q_{\text{BIPOP-CMA-ES}}. 
\end{align*}

\subsection{Properties of the minimal SCV solutions}

\begin{figure}
	\centering
	\resizebox{0.695\textwidth}{!}{
		\begin{tikzpicture}
		\tikzstyle{every axis plot}=[semithick]
		\begin{axis}[
		xlabel={$n=(N-1)/2$},
		ylabel={$SCV_n$},
		xmode=log,
		ymode=log,
		domain=1:74,
		width=8cm,
		height=6cm,
		cycle list name=linestyles*,
		legend pos=north east
		]
		\addplot table[x index=0,y index=1] {./cv2.txt};
		\addlegendentry {ME($N$)}			
		\addplot {1/(2*x+1)};
		\addlegendentry {$1/N$}			
		\addplot {2/(2*x+1)^2};
		\addlegendentry {$2/N^2$}			
		\end{axis}
		\end{tikzpicture}
	}
	\caption{The minimal SCV of the exponential cosine-square functions as the function of $n$ 
		in log-log scale}
	\label{fig:mincv}
\end{figure}
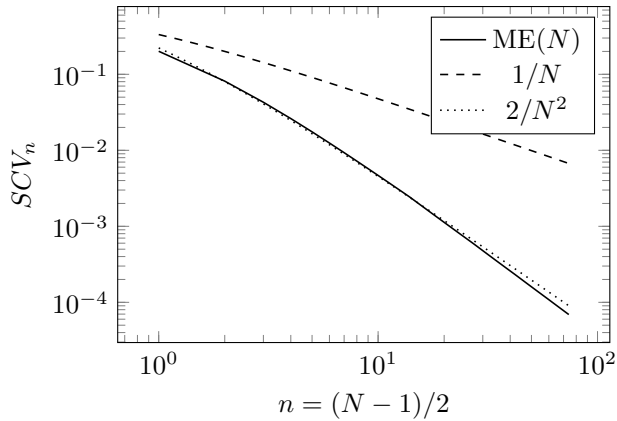

The minimal SCV values obtained by the BIPOP-CMA-ES optimization, which we consider as being optimal for $n=1,\dots,74$, are depicted in Figure \ref{fig:mincv}.
Apart from the minimal SCV values of the exponential cosine-square functions, Figure \ref{fig:mincv} also plots $1/N$ and $2/N^2$, for comparison. 
The $SCV=1/N$ is known to be the minimal SCV value of phase-type (PH) distributions of order $N$ \cite{[ALDO87]}, which form a subset in the set of ME distributions by assuming positive off-diagonal elements for $\mx{B}$ and positive elements for $\ve{\beta}$.
The  $2/N^2$ curve is reported to be the approximate decay rate in \cite{horvath2016concentrated}, up to $n=23$ ($N=47$). 

Figure \ref{fig:mincv} indicates that the SCV decreases much faster than $1/N$ and a bit faster than $2/N^2$. Indeed, $2/N^2$ is a good approximate up to $n=23$, but the decay seems to decrease below $2/N^2$ for $n>23$. We suspect that the decrease is asymptotically polynomial (at $n\rightarrow\infty$), that we checked by plotting the $SCV n^{2.14}$ function in Figure \ref{fig:faster}. While the exponent is determined empirically and might be slightly off, it is visible that the convergence is faster than $1/n^2$.

\begin{figure}
	\centering
	\resizebox{0.495\textwidth}{!}{
		\begin{tikzpicture}
		\tikzstyle{every node}=[font=\scriptsize]
		\tikzstyle{every axis plot}=[semithick]
		\begin{axis}[
			xlabel={$n=(N-1)/2$},
			ylabel={$SCV_n \cdot n^{2.14}$},
			ymode=log,
			width=7cm,
			height=5cm,
			cycle list name=linestyles*
			]
			\addplot table[x index=0,y expr=\thisrowno{1}*\thisrowno{0}^2.14] {./cv2.txt};
		\end{axis}
	\end{tikzpicture}
	}	

	\caption{The function $S(n)=SCV_n\cdot n^{2.14}$ with logarithmic $y$ axis}
	\label{fig:faster}
\end{figure}
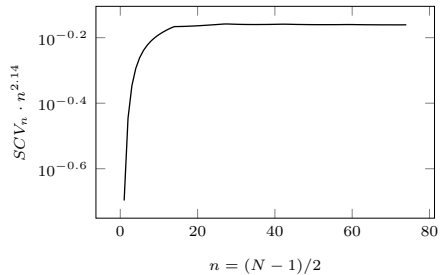

\subsection{The parameters providing the minimal SCV}

\begin{figure}
	\centering
	\resizebox{0.495\textwidth}{!}{
	\begin{tikzpicture}
	\tikzstyle{every node}=[font=\scriptsize]
	\tikzstyle{every axis plot}=[semithick]
	\begin{axis}[
	xlabel={$n=(N-1)/2$},
	ylabel={$\omega$},
	ymin=0,
	width=7cm,
	height=5cm,
	cycle list name=linestyles*
	]
	\addplot table[x index=0,y index=1] {./omega.txt};
	\end{axis}
	\end{tikzpicture}}
\hfill	
	\resizebox{0.495\textwidth}{!}{
	\begin{tikzpicture}
	\tikzstyle{every node}=[font=\scriptsize]
	\tikzstyle{every axis plot}=[semithick]
	\begin{axis}[
	xlabel={$n=(N-1)/2$},
	ylabel={$\omega$},
	xmode=log,
	ymode=log,
	ymin=0,
	width=7cm,
	height=5cm,
	cycle list name=linestyles*
	]
	\addplot table[x index=0,y index=1] {./omega.txt};
	\end{axis}
	\end{tikzpicture}}
	\caption{The $\omega$ parameter providing the minimal SCV in lin-lin and log-log scales}
	\label{fig:omega}
\end{figure}
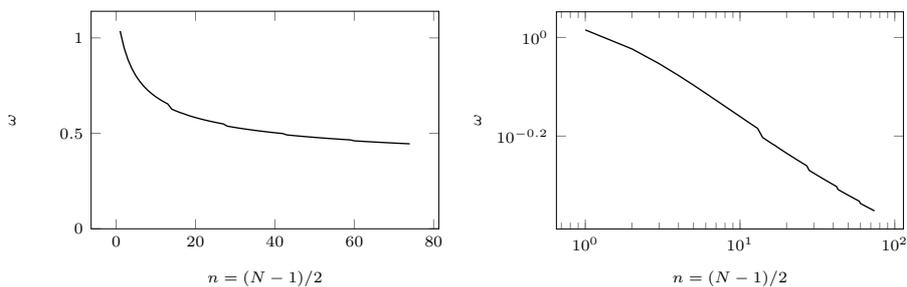

In this section, we investigate the parameters 
corresponding to the minimal SCV and provide an intuitive explanation for the observed behaviour. 
Figure \ref{fig:omega} depicts the optimal $\omega$ parameter as the function of $n$. It shows slow decrease with some inhomogeneity around $n=14,28,43,60$. 
Figure \ref{fig:omega} suggests that $\omega$ tends to $0$ as $n\rightarrow 0$, and the inhomogeneity is related with the behaviour of $\phi_k$ parameters, as it is detailed below. 

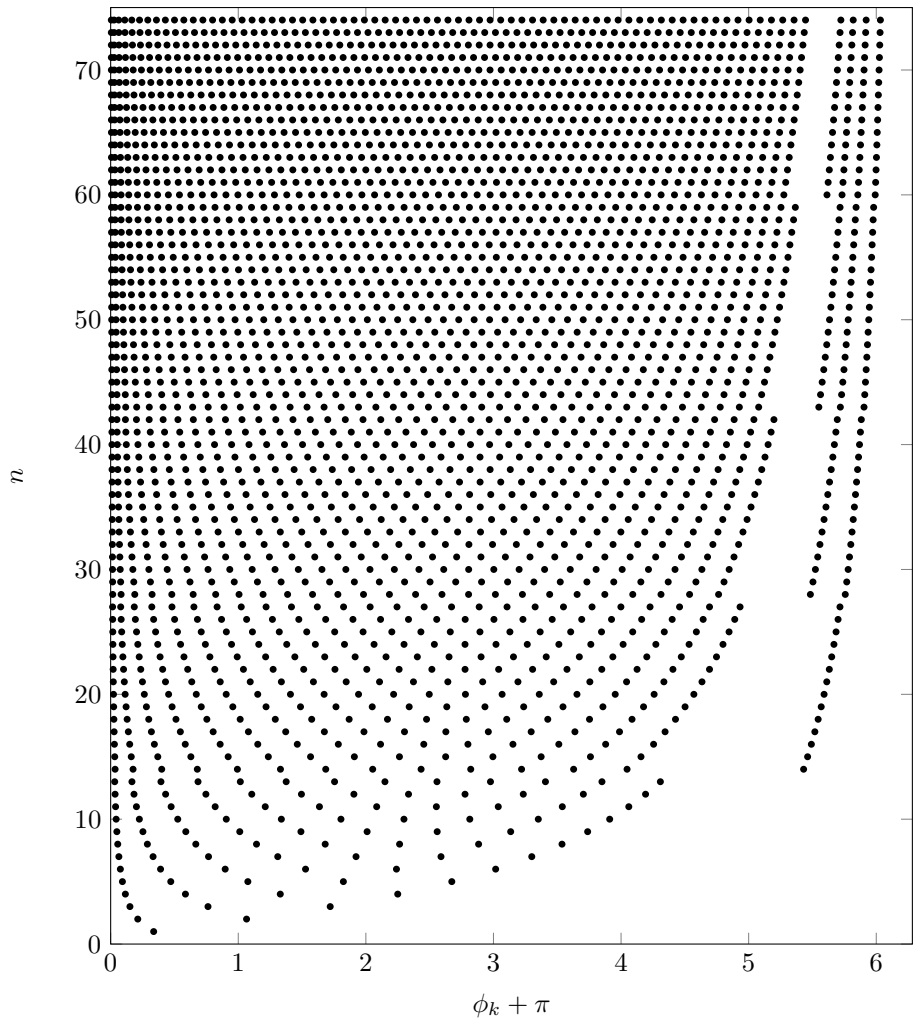
\begin{figure}
	\centering
	\begin{tikzpicture}
	\tikzstyle{every axis plot}=[semithick]
	\begin{axis}[
	xlabel={$\phi_k+\pi$},
	ylabel={$n$},
	ymin=0,
	ymax=75,
	xmin=0,
	xmax=6.283,
	width=\textwidth,
	height=14cm
	]
	\addplot[only marks,mark=*,mark size=1pt] table[x index=0,y index=1] {./phi.txt};
	\end{axis}
	\end{tikzpicture}	
	\caption{The location of the $\phi_k,k=1,\dots,n$ parameters providing the minimal SCV}
	\label{fig:phi}
\end{figure}

The visual appearance of the optimal $\phi_k$ parameters in Figure \ref{fig:phi} reveal many more interesting properties. First of all, since the period of the cosine-square function is $\pi$, the $\phi_k$ parameters in \eqref{eq:cossquare} are $2\pi$ periodic. That is, adding an integer times $2\pi$ to any of the $\phi_k$ parameters does not change the $f^+(t)$ function. In Figure \ref{fig:phi} we transformed all $\phi_k$ parameters to the $(-\pi,\pi)$ range and depicted $\phi_k+\pi$ instead of $\phi_k$ because $\phi_k+\pi$ indicates the location where the cosine-square term with $\phi_k$ gets to be zero in the $(0,2\pi)$ cycle, which is the $(0,2\pi/\omega)$ interval of $f^+(t)$. The $n$th row in Figure \ref{fig:phi} depicts $n$ points, which are the optimal $\phi_k+\pi$ values for $k=1,\ldots,n$. In these $n$ points of the $(0,2\pi)$ cycle $f^+(t)$ equals zero. In between these zeros 
$f^+(t)$ has humps. 

Figure \ref{fig:cossquare} demonstrates the effect of $\phi_k$ on $f^+(t)$ for $\omega=1$, $\phi_i+\pi=0.1, 1, 2$ for $i=1,2,3$. In the $(0,2\pi)$ interval, both, $f^+(t)$ and $ \prod_{i=1}^{n} \cos^2\left(\frac{ t - \phi_i}{2}\right)$, have zeros at $0.1$, $1$ and $2$. The left curve depicts $f^+(t)$ which contains the  exponential attenuation while the right curve depicts $\prod_{i=1}^{n} \cos^2\left(\frac{ t - \phi_i}{2}\right)$, which is without the exponential attenuation. In the left curve the sizes of the humps depend on the distance of the neighbouring zeros. The hump between $(0.1,1)$ is smaller than the one between $(1,2)$, which indicates that the closer the neighbouring zeros are the smaller the humps are.  
In the left curve, the exponential attenuation also affects the sizes of the humps. 
With the exponential attenuation the hump between $(0.1,1)$ is larger than the one between $(1,2)$. Additionally, the SCV is more sensitive to the humps farther from the main peak, which motivates the fact that the $\phi_k+\pi$ parameters are more concentrated around $0$ to make the function as flat as possible near $t=0$, where the exponential attenuation is rather weak. 
In Figure \ref{fig:cossquare}, right to the main peak $f^+(t)$ is suppressed by 
the exponential attenuation. 

In Figure \ref{fig:phi}, nor $n<14$ the zeros are located between $0$ and $5$, which means that in this range of $n$ the $f^+(t)$ is kept close to zero by the cosine-square functions in interval $(0,5)$, it has a peak between $5$ and $2\pi$, and the next cycles are suppressed by the exponential attenuation. 

\begin{figure}
	\centering
	\resizebox{0.95\textwidth}{!}{
		\begin{tikzpicture}
		\tikzstyle{every axis plot}=[semithick]
		\begin{axis}[
			yticklabel style={
				/pgf/number format/fixed,
				/pgf/number format/precision=5
			},
			scaled y ticks=false,
			xlabel={$t$},
			ylabel={$f^+(t)$},
			domain=0:6.7,
			samples=200,
			ymin=0,
			xmin=0,
			xmax=6.7,
			width=10cm,			
			height=5cm,
			legend pos=north west,			
			cycle list name=linestyles*
			]
			\addplot[solid]
			{exp(-x)*cos(deg((x-3.24)/2))^2*cos(deg((x-4.14)/2))^2*cos(deg((x-5.14)/2))^2};
		\end{axis}
		\end{tikzpicture}
		$~~~$
		\begin{tikzpicture}
		\tikzstyle{every axis plot}=[semithick]
		\begin{axis}[
			yticklabel style={
				/pgf/number format/fixed,
				/pgf/number format/precision=5
			},
			scaled y ticks=false,
			xlabel={$t$},
			ylabel={$\prod_{i=1}^{n} \cos^2\left(\frac{ t - \phi_i}{2}\right)$},
			domain=0:2.26,
			samples=200,
			ymin=0,
			ymax=0.005,
			xmin=0,
			xmax=2.26,
			width=10cm,			
			height=5cm,
			legend pos=north west,			
			cycle list name=linestyles*
			]
			\addplot[solid]{cos(deg((x-3.24)/2))^2*cos(deg((x-4.14)/2))^2*cos(deg((x-5.14)/2))^2};
		\end{axis}
		\end{tikzpicture}
	}	
	\caption{$f^+(t)$ and $ \prod_{i=1}^{n} \cos^2\left(\frac{\omega t - \phi_i}{2}\right)$, with $\omega=1$, $\phi_i+\pi=0.1, 1, 2$ for $i=1,2,3$}
	\label{fig:cossquare}
\end{figure}
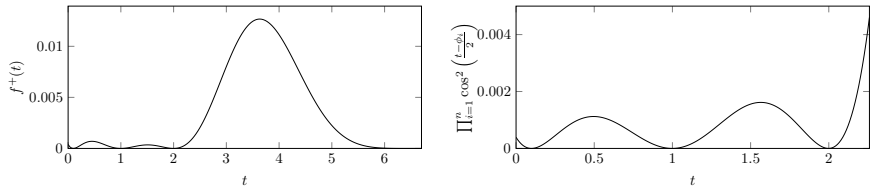

For $n\geq14$, there is a gap in the sequence of $\phi_k$ parameters, indicating the location of a spike. It means that the exponential attenuation is not strong enough for suppressing $f^+(t)$ right after the spike, and the minimal SCV is obtained when some cosine-square terms are used to enforce the suppression beyond the spike.

The number of cosine-square terms used to suppress $f^+(t)$ beyond the spike changes at $n=14,28,43,60$. These changes result in the small inhomogeneity in the $\omega$ values at 
$n=14,28,43,60$ in Figure \ref{fig:omega}.

According to our intuitive understanding $\omega$ tends to $0$ as $n\rightarrow \infty$,
because the cosine squared terms are more efficient to suppress $f^+(t)$ than the exponential attenuation, and for large $n$ values, the cosine squared terms create a sharp spike inside the $0,2\pi$ cycle (which is the $(0,2\pi/\omega)$ interval), such that zeros are located on both sides of the spike. 
The cosine squared terms are $2\pi/\omega$ periodic and consequently, there is a spike also in the $(2\pi/\omega,4\pi/\omega)$ interval which is suppressed by exponential attenuation. 
The exponential attenuation in a $2\pi/\omega$ long interval is $e^{2\pi/\omega}$. In order to efficiently suppress the spike in the $(2\pi/\omega,4\pi/\omega)$ interval, $\omega$ has to be small. 

\section{Heuristic optimization with 3 parameters}
\label{sec:heuristic}

According to the previously discussed approach the number of parameters to optimize increases with $n$. This drawback limits the applicability of the general optimization procedures to about $n\leq 74$ in case of BIPOP-CMA-ES and about $n\leq 180$ in case of the basic CMA-ES. By these $n$ values the optimization procedure takes several days to terminate on our average PC clocked at 3.4 GHz.

While the $f^+(t)$ functions obtained this way have an extremely low ($\approx 10^{-5}$) SCV already, some applications might benefit from functions with even lower SCV. To overcome this limitation we developed a suboptimal heuristic procedure, that aims to obtain low SCV for a given large order $n$. The main idea behind this procedure is that we exploit the regular pattern of the optimal $\phi_i$ parameters instead of optimizing each of them individually.

The appropriate description of the $\phi_i+\pi$ parameters (see Figure \ref{fig:phi}) is, however, not obvious. We have tried to spread these parameters evenly in $(0,2\pi)$, and we also tried to approximate the increased density near zero. These initial experiments did not lead to satisfactory results.

The heuristic procedure that met our expectations has to optimize only three parameters, independent of the order $n$.  The procedure is based on the assumption that the location of the spike plays the most important role in the SCV, and the exact location of the zeros of the cosine-squared terms are less important, the only important feature is that the cosine-squared terms should suppress $f^+(t)$ uniformly in the 
$(0,2\pi)$ cycle -- apart of the spike. 

Based on this assumption we set the zeros of the cosine-squared terms equidistant. 
This way the place of the spike ($p$) and its width ($w$) inside the $(0,2\pi)$ interval completely define the $\phi_k$ values for a given order $n$. 

The distance of the zeros ($d$) and the number of zeros before the spike ($i$) are
\begin{align} 
d = \frac{2\pi-w}{n}, ~~~ i = \left\lfloor \frac{p-w/2}{d} +\frac{1}{2} \right\rfloor,  \label{eq:distzero}
\end{align}
and for $k=1,\ldots,n$ the zeros are located at
\begin{align}
\phi_k+\pi = \left\{ \begin{array}{ll}
(k-1/2) d & \mbox{if~} k\leq i,\\
(k-1/2) d + w & \mbox{if~} k > i.
\end{array}
\right.\label{eq:phiheur}
\end{align}

\begin{figure}
	\centering
	\resizebox{0.695\textwidth}{!}{
		\begin{tikzpicture}
		\tikzstyle{every axis plot}=[semithick]
		\begin{axis}[
		log ticks with fixed point,
		xtick={10, 20, 50, 75},
		xlabel={$n=(N-1)/2$},
		ylabel={$SCV_n$},
		xmode=log,
		ymode=log,
		xmin=10, xmax=75,
		width=8cm,
		height=6cm,
		cycle list name=linestyles*,
		legend pos=north east
		]
		\addplot table[x index=0,y index=1] {./cv2.txt};
		\addlegendentry {BIPOP-CMA-ES}			
		\addplot table[x index=0,y index=1] {./cv2_heur.txt};
		\addlegendentry {3-parameter heuristic}			
		\end{axis}
		\end{tikzpicture}
	}
	\caption{The minimal and the heuristic SCV as a function of order $n$ in log-log scale}
	\label{fig:mincv_heur}
\end{figure}

The obtained heuristic procedure has only 3 parameters to optimize: $\omega$, $p$ and $w$ (see Algorithm \ref{alg:scvheur}). 
The SCV values computed by this heuristic optimization procedure are depicted in Figure \ref{fig:mincv_heur} for $n=10,\ldots,74$. The figure indicates that
for small $n$ values ($n<15$)  the procedure is inaccurate, 
but it is not a problem because the minimal SCV can be computed quickly in these cases. 
For larger $n$ values ($n\geq 15$) 
the SCV provided by the heuristic procedure is less than the double of the minimal SCV in the given range.  
Assuming that this ratio to the optimal SCV remains valid also for $n>74$ the heuristic procedure which is applicable up to $n=1000$, is an efficient tool to compute highly concentrated non-negative matrix exponential functions for large order $n$ values. 

\begin{algorithm}[H]
	\begin{algorithmic}[1]
		\Procedure{ComputeSCV-heuristic}{$\omega,p,w$}
	
		\State Obtain the distance of zeros $d$ and threshold $i$ by \eqref{eq:distzero}		
		\State Obtain $\phi_i$ for $i=1,\dots,n$ by \eqref{eq:phiheur}
		\State Compute $SCV$ by Algorithm \ref{alg:scv}
		\State \Return $SCV$
		\EndProcedure
	\end{algorithmic}
	\caption{The objective function of the heuristic method}\label{alg:scvheur}
\end{algorithm}

Figure \ref{fig:mincv_heur_largen} depicts the SCV obtained by the heuristic procedure for large $n$ values, compared with the outputs of the highly accurate BIPOP-CMA-ES and the faster CMA-ES optimization procedures.
Figure \ref{fig:mincv_heur_largen} suggests that the heuristic optimizations remains very close to the minimum also for larger $n$ values and the SCV obtained by the heuristic optimization maintains its polynomial decay between $n^{-2.1}$ and $n^{-2.2}$.

\begin{figure}
	\centering
	\resizebox{0.695\textwidth}{!}{
		\begin{tikzpicture}
		\tikzstyle{every axis plot}=[semithick]
		\begin{axis}[
		/pgf/number format/.cd, 1000 sep={},
		log ticks with fixed point,
		xlabel={$n=(N-1)/2$},
		ylabel={$SCV_n$},
		xtick={10, 20, 50, 100, 200, 500, 1000},
		xmode=log,
		ymode=log,
		xmin=10, 
		width=8cm,
		height=6cm,
		cycle list name=linestyles*,
		legend pos=north east
		]
		\addplot table[x index=0,y index=1] {./cv2.txt};
		\addlegendentry {BIPOP-CMA-ES}			
		\addplot table[x index=0,y index=1] {./cv2_heur.txt};
		\addlegendentry {3-parameter heuristic}			
		\addplot table[x index=0,y index=1] {./cv2_cmaes.txt};
		\addlegendentry {CMA-ES}			
		\end{axis}
		\end{tikzpicture}
	}
	\caption{The minimal and the heuristic SCV as a function of order $n$ in log-log scale}
	\label{fig:mincv_heur_largen}
\end{figure}

\section{Conclusion}\label{sec:concl}

The hyper-trigonometric representation introduced in this paper enables the efficient, explicit computation of the squared coefficient of variation of exponential cosine-squared distributions. On the top of this result, by selecting the appropriate numerical precision and  picking a suitable numerical optimization method, we managed to create matrix exponential functions with extremely low SCV. Such non-negative, low-SCV functions are important ingredients of several numerical procedures, including the numerical inverse Laplace transform methods that are widely used in many research fields.

\bibliographystyle{plain}
\bibliography{mefication}

\begin{thebibliography}{10}

\bibitem{[ALDO87]}
D.~Aldous and L.~Shepp.
\newblock The least variable phase type distribution is {E}rlang.
\newblock {\em Stochastic Models}, 3:467--473, 1987.

\bibitem{AsBl99}
S.~Asmussen and M.~Bladt.
\newblock Point processes with finite-dimensional conditional probabilities.
\newblock {\em Stochastic Processes and their Application}, 82:127--142, 1999.

\bibitem{BeNi07}
N.~G. Bean and B.~F. Nielsen.
\newblock Quasi-birth-and-death processes with rational arrival process
  components.
\newblock {\em Stochastic Models}, 26(3):309--334, 2010.

\bibitem{bladt2017matrix}
Mogens Bladt and Bo~Friis Nielsen.
\newblock {\em Matrix-Exponential Distributions in Applied Probability},
  volume~81.
\newblock Springer, 2017.

\bibitem{[BUCH10c]}
P.~Buchholz, A.~Horv\'ath, and M.~Telek.
\newblock Stochastic {P}etri nets with low variation matrix exponentially
  distributed firing time.
\newblock {\em International Journal of Performability Engineering},
  7:441--454, 2011.
\newblock Special issue on Performance and Dependability Modeling of Dynamic
  Systems.

\bibitem{[BUCH10a]}
Peter Buchholz and Mikl\'os Telek.
\newblock On minimal representation of rational arrival processes.
\newblock {\em Annals of Operations Research}, 202:35--58, 2013.

\bibitem{DGJM09}
J.~M. Davis, I.~A. Gravagne, B.~J. Jackson, and R.~J.~Marks II.
\newblock Controllability, observability, realizability and stability of
  dynamic linear systems.
\newblock {\em Electronic Journal of Differential Equations}, (37):1--32, 2009.

\bibitem{DeSc00}
B.~de~Schutter.
\newblock Minimal state-space realization in linear system theory: An overview.
\newblock {\em Journal of Computational and Applied Mathematics},
  121(1-2):331--354, 2000.

\bibitem{elteto2006minimal}
T~{\'E}ltet{\H{o}}, S~R{\'a}cz, and M~Telek.
\newblock Minimal coefficient of variation of matrix exponential distributions.
\newblock In {\em 2nd Madrid Conference on Queueing Theory, Madrid, Spain (July
  2006)}, 2006.

\bibitem{hansen2006cma}
Nikolaus Hansen.
\newblock The cma evolution strategy: a comparing review.
\newblock In {\em Towards a new evolutionary computation}, pages 75--102.
  Springer, 2006.

\bibitem{hansen2009benchmarking}
Nikolaus Hansen.
\newblock Benchmarking a bi-population cma-es on the bbob-2009 function
  testbed.
\newblock In {\em Proceedings of the 11th Annual Conference Companion on
  Genetic and Evolutionary Computation Conference: Late Breaking Papers}, pages
  2389--2396. ACM, 2009.

\bibitem{horvath2016concentrated}
Ill{\'e}s Horv{\'a}th, Orsolya S{\'a}f{\'a}r, Mikl{\'o}s Telek, and Bence
  Z{\'a}mb{\'o}.
\newblock Concentrated matrix exponential distributions.
\newblock In {\em European Workshop on Performance Engineering}, pages 18--31.
  Springer, 2016.

\bibitem{[IHORV17a]}
Ill\'es Horv\'ath, Zs\'ofia Talyig\'as, and Mikl\'os Telek.
\newblock An optimal inverse {L}aplace transform method without positive and
  negative overshoot – an integral based interpretation.
\newblock {\em Electronic Notes in Theoretical Computer Science}, 337:87 --
  104, 2018.
\newblock Proceedings of the Ninth International Workshop on the Practical
  Application of Stochastic Modelling (PASM).

\bibitem{lipsky2008queueing}
Lester Lipsky.
\newblock {\em Queueing Theory: A linear algebraic approach}.
\newblock Springer Science \& Business Media, 2008.

\bibitem{LTRF}
I.~M. Longman and M.~Sharir.
\newblock Laplace transform inversion of rational functions.
\newblock {\em Geophysical Journal International}, 25(1-3):299--305, 1971.

\bibitem{rechenberg1978evolutionsstrategien}
Ingo Rechenberg.
\newblock Evolutionsstrategien.
\newblock In {\em Simulationsmethoden in der Medizin und Biologie}, pages
  83--114. Springer, 1978.

\end{thebibliography}

\end{document}